\documentclass[]{IEEEtran}
\usepackage[T1]{fontenc}
\usepackage[utf8]{inputenc}
\usepackage[dvips]{graphicx}
\usepackage{eepic}
\usepackage{booktabs}
\usepackage{array}
\newcolumntype{L}[1]{>{\raggedleft\let\newline\\\arraybackslash\hspace{0pt}}m{#1}}
\newcolumntype{C}[1]{>{\centering\let\newline\\\arraybackslash\hspace{0pt}}m{#1}}
\usepackage{subcaption}
\usepackage{amsmath,amssymb,amsthm,mathtools}
\usepackage{textcomp}
\usepackage{cite}
\usepackage{algorithm}
\usepackage{algpseudocode}
\usepackage[table,dvipsnames]{xcolor}
\usepackage[utf8]{inputenc}
\usepackage{cite}
\usepackage{url}
\usepackage{balance}
\usepackage{tikz}
\usetikzlibrary{arrows.meta,positioning,calc,decorations.pathreplacing}

\usepackage[pagebackref=false,colorlinks,linkcolor=black,citecolor=black]{hyperref}
\usepackage[pagebackref=false]{hyperref}

\newtheorem{theorem}{Theorem}
\newtheorem{lemma}{Lemma}

\title{$q$-ary GRAND}
\author{Reza Dianat, Majid Nasiri Khormuji,~\IEEEmembership{Senior Member,~IEEE}, and Hiroyuki Seki%
\thanks{Reza Dianat is with the Electrical Engineering Department, Persian Gulf University, Bushehr, Iran (e-mail: dianat@pgu.ac.ir). Majid Nasiri Khormuji  and Hiroyuki Seki are with 1Finity Inc., Japan (e-mail: majid.n.k@ieee.org, hseki@fujitsu.com).}}

\begin{document}
\maketitle

\begin{abstract}
We develop $q$-ary Guessing Random Additive Noise Decoding
(GRAND) for linear codes over a $q$-ary alphabet.
The decoder works on a sorted symbol-likelihood array and
uses three local child generation rules to generate symbol-deviation patterns. We then prove
that these rules induce a monotone rooted spanning tree of the full row-index
space, so best-first traversal gives maximum-likelihood (ML) decoding under
unlimited search. Reed--Solomon simulations verify ML agreement and show the finite-budget performance--complexity tradeoff.
\end{abstract}
\begin{IEEEkeywords}
$q$-ary GRAND, ML decoding, symbol-level decoding, non-binary codes, higher-order modulation.
\end{IEEEkeywords}

\section{Introduction}
Guessing Random Additive Noise Decoding (GRAND) decodes by ordering noise patterns rather than codewords. It queries candidate noise patterns from most to least likely and accepts the first pattern whose removal from the received word produces a valid codeword \cite{duffy2019grand}. With exact ordering, this stopping rule gives the maximum-likelihood (ML) decision. Soft-input GRAND (SGRAND) extends the same principle to soft-output channels, including binary-input additive white Gaussian noise (AWGN), and remains ML under unlimited search \cite{sgrand}. SGRAND's exact enumeration, however, is recursive and uses dynamic data structures, which complicates parallel hardware implementation. Ordered Reliability Bits GRAND (ORBGRAND) was introduced to make the search more implementation-friendly by replacing exact reliability ordering with structured ordered-reliability approximations \cite{orbgrand}. This gives a highly effective binary-input GRAND architecture, but its ordering is approximate and its natural formulation is tied to binary ordered-reliability patterns.

For non-binary codes and higher-order modulations, the decoder receives ordered symbol likelihoods at each coordinate, so a purely bit-reliability view does not fully expose the available structure. Prior GRAND-based work has considered channels with non-binary inputs \cite{a-sol}, and several approximate GRAND-based algorithms for non-binary modulations have also been proposed \cite{sorb-fc,sorb,orbcc,orbRS, orb-hw, sog}. However, the symbol-level construction in \cite{a-sol} can generate a number of child patterns that grows with the block length, while approximate variants trade exact likelihood ordering for reduced complexity. Thus, to the best of our knowledge, existing symbol-level GRAND constructions do \emph{not} simultaneously provide bounded local branching and a proof of global likelihood-ordered coverage of the full \(q\)-ary pattern space.

We therefore develop \(q\)-ary GRAND as a symbol-level construction for length-\(n\) linear codes over a \(q\)-ary finite field alphabet $\mathbb{F}_q$. The decoder first forms a sorted \(q\times n\) likelihood array. Within each coordinate, vertical sorting ranks the \(q\) candidate symbols by likelihood; across coordinates, horizontal sorting orders the first-deviation costs. Candidate patterns are then represented by row-index vectors in \(\{0,\ldots,q-1\}^n\), where each entry specifies which ranked symbol is selected in the corresponding ordered coordinate. From any non-root pattern, \(q\)-ary GRAND generates at most three children. The first starts a new deviation, the second deepens the current deviation, and the third shifts a depth-one deviation to the next ordered coordinate. This contrasts with \cite{a-sol}, where the child-generation burden \emph{grows} with block length; here, each parent generates at most three children, independent of block length, modulation alphabet size, and code alphabet size.

Our main contribution is to show that this \emph{bounded} local generation is not merely a heuristic reduction of the search space. We prove that the three child rules induce a rooted spanning tree over the full row-index space and that every generated edge is likelihood non-increasing. Consequently, best-first traversal visits candidate patterns in \emph{global} non-increasing likelihood order, and unlimited-search \(q\)-ary GRAND returns the ML codeword. We then discuss algorithmic cost and implementation considerations, and evaluate the decoder on non-binary Reed-Solomon and binary extended Bose-Chaudhuri-Hocquenghem (eBCH) benchmarks against Wolf's syndrome-trellis ML decoder\cite{wolf}, Berlekamp-Massey (BM) decoding \cite{bla-a}, symbol-level ORBGRAND \cite{sorb}, ordered statistics decoding (OSD) \cite{foss}, and Chase decoding \cite{chase}.
\par The remainder of this letter is organized as follows. Section~\ref{sec:qary} presents the
$q$-ary GRAND search construction and decoding algorithm. Section~\ref{sec:ml} proves
ML optimality, Section~\ref{sec:wc} discusses complexity, Section~\ref{sec:num} reports
representative numerical results, and Section~\ref{sec:con} concludes the letter.

\section{$q$-ary GRAND Construction}\label{sec:qary}
Consider a length-$n$, dimension-$k$ linear code
\begin{equation}
\mathcal{C}(n,k)\subseteq \mathbb{F}_q^n,
\end{equation}
with parity-check matrix $ \mathbf{H}\in\mathbb{F}_q^{(n-k)\times n},$
where $q$ denotes the size of the code alphabet. Let
\begin{equation}
\mathcal{A}=\{u_0,\ldots,u_{M-1}\},
\end{equation}
denote the modulation constellation, where $M=|\mathcal{A}|$.
For notational simplicity, the construction and proof below are presented
for $M=q$. The method itself is not restricted to this case and also applies
when $M\neq q$, as demonstrated by the numerical examples in Sec.~\ref{sec:num}.
Candidate patterns are generated in the modulation domain and mapped to the
corresponding code-field symbols before the parity-check test.
Under the $M=q$ presentation, each field symbol is mapped one-to-one to a
modulation symbol.
Assume a memoryless channel and let
\begin{equation}
\mathbf{y}=(y_0,\ldots,y_{n-1}),
\end{equation}
denote the received vector. The log-likelihood (LL) of modulation symbol
$u_i$ at coordinate $j$ is defined as
\begin{equation}
a_{i,j}\triangleq \log f(y_j| u_i),
\qquad
0\leq i\leq q-1,\quad 0\leq j\leq n-1.
\label{eq:local_likelihood}
\end{equation}
For complex AWGN, $a_{i,j}$ is, up to terms independent of $u_i$,
proportional to the negative squared Euclidean distance between $y_j$
and $u_i$.

\textcolor{black}{The LLs in \eqref{eq:local_likelihood} are arranged in a $q\times n$ array. Fig.~\ref{fig:sorted_array}(a) illustrates the construction for a $q=4$, $n=3$
example.} The decoder first sorts the likelihood array in two steps. For each original coordinate $j$, vertical sorting orders the $q$ local symbol candidates from most likely to least likely. Let $\pi_j(s)$ denote the index of the $s$-th ranked symbol in coordinate $j$, so that
\begin{equation}
    a_{\pi_j(0),j}
    \geq a_{\pi_j(1),j}
    \geq \cdots
    \geq a_{\pi_j(q-1),j} .
    \label{eq:vertical_raw}
\end{equation}
\textcolor{black}{In Fig.~\ref{fig:sorted_array}(a), $\pi_0(\cdot )$ sorts column $0$, so that $\pi_0(0)=0$, $\pi_0(1)=2$, $\pi_0(2)=3$ and $\pi_0(3)=1$. Functions $\pi_1(\cdot )$ and $\pi_2(\cdot )$ sort columns $1$ and $2$, respectively. Fig.~\ref{fig:sorted_array}(b) shows the vertically sorted pair array. A pair $(s,\text{LL})$ in row $i$ and column $j$ means that modulation-symbol index $s$ in coordinate $j$ has vertical rank $i$ and log-likelihood $\text{LL}$. Horizontal sorting then orders the sorted coordinates (the sorted columns) by their first-deviation costs. Fig.~\ref{fig:sorted_array}(c) is the horizontally sorted triplet array of Fig.~\ref{fig:sorted_array}(b). A triplet $(s,c,\text{LL})$ in  row $i$ and column $j$ means that modulation-symbol index $s$ in original coordinate $c$ has log-likelihood $\text{LL}$ and its vertical and horizontal ranks are $i$ and $j$, respectively.  The sorted column $2$ of Fig.~\ref{fig:sorted_array}(b) comes first in Fig.~\ref{fig:sorted_array}(c), followed by column $1$ and then column $0$.} Let $h(j)$ be the original coordinate (column) placed in ordered column $j$. The ordering is chosen such that
\begin{equation}
	\begin{aligned}
		&a_{\pi_{h(j)}(0),h(j)}-a_{\pi_{h(j)}(1),h(j)} \\
		&\qquad\leq
		a_{\pi_{h(j+1)}(0),h(j+1)}
		-a_{\pi_{h(j+1)}(1),h(j+1)},
	\end{aligned}
	\label{eq:horizontal_raw}
\end{equation}
for $0\leq j<n-1$. Define the sorted likelihoods
\begin{equation}
	\alpha_j(s)\triangleq a_{\pi_{h(j)}(s),h(j)},
	\qquad 0\leq s\leq q-1 .
	\label{eq:alpha_def}
\end{equation}
Then
\begin{equation}
	\alpha_j(0)\geq \alpha_j(1)\geq\cdots\geq \alpha_j(q-1),
	\label{eq:vert}
\end{equation}
and
\begin{equation}
	\alpha_j(0)-\alpha_j(1)
	\leq
	\alpha_{j+1}(0)-\alpha_{j+1}(1),
	\qquad 0\leq j<n-1 .
	\label{eq:horz}
\end{equation}

\textcolor{black}{In Fig.~\ref{fig:sorted_array}, the ordered coordinates become $h=[2,1,0]$ because the
first-deviation costs in Fig.~\ref{fig:sorted_array}(b) are ranked as
$(13-13)<(12-10)<(16-12)$.}

Property \eqref{eq:vert} says that moving downward within a coordinate cannot
increase likelihood, while \eqref{eq:horz} orders the first-step deviation costs
from left to right. 
\textcolor{black}{Using the fully sorted array, a graph with a $q\times n$ node matrix is constructed. Each node corresponds to a triplet in the sorted array.} A candidate symbol-deviation pattern is represented by a row-index vector
\begin{equation}
    \mathbf r=(r_0,\ldots,r_{n-1})
    \in\{0,\ldots,q-1\}^n .
    \label{eq:row_index_vector}
\end{equation}
The entry $r_j$ means that ordered coordinate $j$ uses its $r_j$-th ranked
local symbol. \textcolor{black}{A pattern is represented in the graph by a path that is created by selecting one node in each coordinate and connecting the selected nodes.} The corresponding code-symbol vector is denoted by
$\mathbf p(\mathbf r)$, and its log-likelihood is
\begin{equation}
    \mathrm{LL}(\mathbf r)=\sum_{j=0}^{n-1}\alpha_j(r_j).
    \label{eq:ll}
\end{equation}
The root pattern is $\mathbf{r}^{(0)}=(0,\ldots,0)$, which selects the most likely symbol in every coordinate. \textcolor{black}{In Fig.~\ref{fig:sorted_array}, these symbols are $(\pi_2(0),\pi_1(0),\pi_0(0))=(3,2,0)$ in the sorted array, corresponding to $\mathbf{p}(\mathbf{r}^{(0)})=(0,2,3)$ with $\mathrm{LL}=13+12+16$.}

\begin{figure}[t]
\centering
\small
\setlength{\tabcolsep}{2pt}
\renewcommand{\arraystretch}{1.05}
\begin{tabular}{@{}c@{\hspace{1mm}}c@{\hspace{1mm}}c@{}}
{\scriptsize
\begin{tabular}{cc|ccc}
\multicolumn{2}{c}{} & \multicolumn{3}{c}{$j$}\\
\multicolumn{2}{c}{} & $0$ & $1$ & $2$\\ \cline{3-5}
    & $0$ & $16$ & $10$ & $5$\\
 \raisebox{-1.0ex}{$i$}    & $1$ & $4$ & $8$ & $12$\\
 & $2$ & $12$ & $12$ & $13$\\
    & $3$ & $11$ & $9$ & $13$
\end{tabular}
}
&
{\scriptsize
\begin{tabular}{c|ccc}
	\multicolumn{1}{c}{} & \multicolumn{3}{c}{$j$}\\
	\multicolumn{1}{c}{} & $0$ & $1$ & $2$\\ \cline{2-4}
	$0$ & $(0,16)$ & $(2,12)$ & $(3,13)$\\
	$1$ & $(2,12)$ & $(0,10)$ & $(2,13)$\\
	$2$ & $(3,11)$ & $(3,9)$ & $(1,12)$\\
	$3$ & $(1,4)$ & $(1,8)$ & $(0,5)$
\end{tabular}}
&
{\scriptsize
\begin{tabular}{c|ccc}
\multicolumn{1}{c}{} & \multicolumn{3}{c}{$j$}\\
\multicolumn{1}{c}{} & $0$ & $1$ & $2$\\ \cline{2-4}
$0$ & $(3,2,13)$ & $(2,1,12)$ & $(0,0,16)$\\
$1$ & $(2,2,13)$ & $(0,1,10)$ & $(2,0,12)$\\
$2$ & $(1,2,12)$ & $(3,1,9)$ & $(3,0,11)$\\
$3$ & $(0,2,5)$ & $(1,1,8)$ & $(1,0,4)$
\end{tabular}
}\\
{\footnotesize (a) $a_{i,j}$}
&
{\footnotesize (b) $(\pi_j(i), a_{\pi_j(i),j})$}
&
{\footnotesize (c)  $(\pi_{h(j)}(i), h(j), \alpha_j(i))$}
\end{tabular}
\vspace{-2mm}
\caption{Non-binary likelihoods for a $q=4$, $n=3$ example. (a) Original array. (b) Vertically sorted array. (c) Vertically and horizontally sorted array.}
\label{fig:sorted_array}
\end{figure}

For every non-root pattern, define its canonical generator by
\begin{equation}
    j^\ast\triangleq \max\{j:r_j>0\},
    \qquad
    i^\ast=r_{j^\ast} .
    \label{eq:generator_def}
\end{equation}
For the root pattern, we use the trivial generator $(i^\ast,j^\ast)=(-1,-1)$. Thus, $j^\ast$ is the rightmost active coordinate and $i^\ast$ is the current
row used in that coordinate. From a pattern with generator $(i^\ast,j^\ast)$,
$q$-ary GRAND creates up to three children:
\begin{itemize}
\item \emph{First child:} if $j^\ast<n-1$, start a new deviation by setting
$r_{j^\ast+1}=1$.
\item \emph{Second child:} if $i^\ast<q-1$ and $j^\ast \geq 0$, deepen the current deviation by
replacing $r_{j^\ast}=i^\ast$ with $r_{j^\ast}=i^\ast+1$.
\item \emph{Third child:} if $i^\ast=1$ and $j^\ast<n-1$, shift a depth-one
deviation to the next ordered coordinate by setting $r_{j^\ast}=0$ and
$r_{j^\ast+1}=1$.
\end{itemize}
The first child starts a new deviation, the second child deepens the current
deviation, and the third child performs the exchange move enabled
by \eqref{eq:vert} and \eqref{eq:horz}. Cf. Fig.~\ref{fig:gen_child}.
\begin{figure}[t!]
    \begin{subfigure}[t]{0.2\textwidth}
\vspace{-0.5cm}
\setlength{\unitlength}{0.65mm}
\begin{picture}(60, 40)(15,0)
\put(0,10){\circle*{2}}
\put(20,25){\circle*{2}}
\put(10,0){\circle*{2}}

\put(0,19){\scriptsize row $i^\ast$}
\put(0,14){\scriptsize row $i^\ast\! +\! 1$}
\put(20,20){\circle{2}}
\put(20,15){\circle{2}}
\put(20,19){\vector(0,-1){3}}
\put(40,29){\vector(0,-1){3}}

\put(18,36){\scriptsize col.}
\put(19,32){\scriptsize $j^\ast$}

\put(0,29){\scriptsize row $0$}
\put(0,24){\scriptsize row $1$}
\put(20,30){\circle*{2}}

\put(40,25){\circle{2}}

\put(38,36){\scriptsize col.}
\put(35,32){\scriptsize $j^\ast \! +\! 1$}
\put(40,30){\circle*{2}}
\put(40,20){\circle*{2}}
\put(40,15){\circle*{2}}
\put(57,36){\scriptsize col.}
\put(55,32){\scriptsize $j^\ast\! + \! 2$}
\put(60,30){\circle*{2}}
\put(60,25){\circle*{2}}
\put(60,20){\circle*{2}}
\put(60,15){\circle*{2}}
\put(40,25){\line(4,1){20}}

\put(11,9){$\mathbf{r}$}
\put(18,9){\scriptsize 2nd child}

\put(20,20){\line(4,1){20}}

\put(10,0){\line(2,3){10}}
\put(20,15){\line(4,3){20}}
\thicklines
\put(0,10){\line(1,-1){10}}

\put(10,0){\line(1,2){10}}

\put(44,23){\scriptsize 1st child}

\put(20,20){\line(2,1){20}}
\put(40,30){\line(1,0){20}}

\end{picture}
    \end{subfigure}
    \begin{subfigure}[t]{0.15\textwidth}
        
\vspace{-0.5cm}
\setlength{\unitlength}{0.6mm}
\begin{picture}(30,40)(-5,0)
\put(0,10){\circle*{2}}
\put(10,0){\circle*{2}}
\put(10,0){\line(1,3){10}}

\put(0,19){\scriptsize row $1$}
\put(20,20){\circle{2}}

\put(15,36){\scriptsize col.}
\put(16,32){\scriptsize $j^\ast$}

\put(0,29){\scriptsize row $0$}
\put(20,30){\circle*{2}}

\put(40,20){\circle{2}}

\put(38,36){\scriptsize col.}
\put(35,32){\scriptsize $j^\ast\! +\! 1$}

\put(40,30){\circle*{2}}
\put(20,30){\line(2,-1){20}}

\put(50,30){\circle*{2}}
\put(60,30){\circle*{2}}
\put(50,20){\circle*{2}}
\put(60,20){\circle*{2}}

\put(40,20){\line(1,1){10}}

\put(17,9){$\mathbf{r}$}

\thicklines
\put(0,10){\line(1,-1){10}}
\put(20,22){\vector(0,1){6}}
\put(40,28){\vector(0,-1){6}}
\put(10,0){\line(1,2){10}}
\put(20,20){\line(2,1){20}}
\put(40,30){\line(1,0){10}}
\put(50,30){\line(1,0){10}}
\put(46,23){\scriptsize 3rd child}
\end{picture}
    \end{subfigure}%
    \centering
    \vspace{-1mm}
    \caption{Three child-generation moves from a parent row-index pattern $\mathbf r$: the 1st and 2nd children (left) and the 3rd exchange child (right). Hollow circles mark the generator nodes.}

    \label{fig:gen_child}
\end{figure}

The vector $\mathbf r$ is mapped back to $\mathbf p(\mathbf r)$ by
assigning to original coordinate $h(j)$ the field symbol corresponding to the modulation-symbol index $\pi_{h(j)}(r_j)$ associated with the
$r_j$-th ranked candidate in ordered coordinate $j$. The candidate is valid if
\begin{equation}
    \mathbf{H}\mathbf p(\mathbf r)^T=\mathbf 0 .
    \label{eq:parity_check}
\end{equation}
Equivalently, each candidate can be represented as a deviation from the most-likely word, allowing the code-membership test to be implemented in the syndrome domain. Related connections between guessing noise patterns and guessing codewords are discussed in \cite{gcd}.
The complete decoding procedure is summarized in Algorithm~\ref{alg:qarygrand}.
The decoder maintains a priority queue ordered by \eqref{eq:ll}. At each step
it extracts the currently most likely pattern, checks whether it satisfies
\eqref{eq:parity_check}, and otherwise inserts up to three children according
to the rules above.

Section~\ref{sec:ml} uses exactly the two ordering properties in
\eqref{eq:vert} and \eqref{eq:horz} to show that the search induced by
Algorithm~\ref{alg:qarygrand} is globally likelihood-ordered.

\section{ML Optimality}\label{sec:ml}
We now prove that the local child-generation rules of Section~\ref{sec:qary}
recover the global likelihood order on the full row-index space. Assume
uniform codeword priors, a memoryless channel, a one-to-one symbol mapping, and
deterministic tie-breaking. The proof has two ingredients: likelihood monotonicity along every generated edge, and
the fact that the generated graph is a rooted spanning tree.

\vspace{-2mm}

\begin{lemma}[Edge monotonicity]\label{lemma:em}
Every child generated by $q$-ary GRAND has LL no larger than
that of its parent.
\end{lemma}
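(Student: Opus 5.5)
The plan is a case-by-case check of the three child-generation rules. Since $\mathrm{LL}(\mathbf r)=\sum_j \alpha_j(r_j)$ in \eqref{eq:ll} is additive over ordered coordinates, and each rule changes at most two entries of $\mathbf r$, the difference $\mathrm{LL}(\text{child})-\mathrm{LL}(\mathbf r)$ reduces to a sum of at most two terms. Each term can then be signed using only \eqref{eq:vert} and \eqref{eq:horz}. Throughout, let $(i^\ast,j^\ast)$ be the generator of the parent as in \eqref{eq:generator_def}. For the root we have $j^\ast=-1$, and only the first child exists.

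\emph{First child.} Only coordinate $j^\ast+1$ changes, from $r_{j^\ast+1}=0$ to $1$. This entry is zero by maximality of $j^\ast$, or trivially so for the root. The change in LL is therefore $\alpha_{j^\ast+1}(1)-\alpha_{j^\ast+1}(0)\le 0$ by \eqref{eq:vert}.

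\emph{Second child.} Only coordinate $j^\ast$ changes, from $i^\ast$ to $i^\ast+1$. The change is $\alpha_{j^\ast}(i^\ast+1)-\alpha_{j^\ast}(i^\ast)\le 0$, again by \eqref{eq:vert}.

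\emph{Third child.} This is the only case needing more than vertical monotonicity, and it is the step I expect to be the (mild) main obstacle. Here $r_{j^\ast}=1$ becomes $0$, and $r_{j^\ast+1}=0$ becomes $1$; the latter entry is zero, again by maximality of $j^\ast$. All other entries are unchanged, so
\begin{equation*}
\mathrm{LL}(\text{child})-\mathrm{LL}(\mathbf r)
=\bigl[\alpha_{j^\ast}(0)-\alpha_{j^\ast}(1)\bigr]-\bigl[\alpha_{j^\ast+1}(0)-\alpha_{j^\ast+1}(1)\bigr].
\end{equation*}
This quantity is $\le 0$ by \eqref{eq:horz} applied at $j=j^\ast$, which is valid since the rule requires $j^\ast<n-1$.

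The only care needed is to confirm that the entries being modified have the claimed prior values. This is exactly what the canonical-generator definition guarantees: every coordinate to the right of $j^\ast$ is zero, and $r_{j^\ast}=i^\ast$. Once that is noted, the three cases exhaust all generated edges, and the lemma follows.
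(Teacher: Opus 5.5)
Your proposal is correct and follows essentially the same route as the paper: a case check over the three child rules, using \eqref{eq:vert} for the first and second children and \eqref{eq:horz} at $j=j^\ast$ for the exchange child. Two points the paper leaves implicit are spelled out in your version: that maximality of $j^\ast$ forces $r_{j^\ast+1}=0$, and that only the first child exists at the root.
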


\begin{proof}
Let $\mathbf r$ be a parent pattern with generator $(i^\ast,j^\ast)$. For
the first child, only coordinate $j^\ast+1$ changes, from row $0$ to row $1$.
Hence
\begin{equation}
\mathrm{LL}(\mathbf r_{\rm first})-\mathrm{LL}(\mathbf r)
=
\alpha_{j^\ast+1}(1)-\alpha_{j^\ast+1}(0)\leq0,
\label{eq:first_mono}
\end{equation}
by \eqref{eq:vert}. For the second child, only coordinate $j^\ast$ changes,
from row $i^\ast$ to row $i^\ast+1$, and therefore
\begin{equation}
\mathrm{LL}(\mathbf r_{\rm second})-\mathrm{LL}(\mathbf r)
=
\alpha_{j^\ast}(i^\ast+1)-\alpha_{j^\ast}(i^\ast)\leq0.
\label{eq:second_mono}
\end{equation}
For the third child, a depth-one deviation is moved from $j^\ast$ to
$j^\ast+1$. Thus
\begin{equation}
\begin{aligned}
\mathrm{LL}(\mathbf r_{\rm third})-\mathrm{LL}(\mathbf r)
&=\bigl[\alpha_{j^\ast}(0)-\alpha_{j^\ast}(1)\bigr] \\
&\quad-\bigl[\alpha_{j^\ast+1}(0)-\alpha_{j^\ast+1}(1)\bigr]\leq0,
\end{aligned}
\label{eq:third_mono}
\end{equation}
where the last inequality follows from \eqref{eq:horz}. Hence all generated
edges are likelihood non-increasing.
\end{proof}

\begin{algorithm}[!t]
\caption{$q$-ary GRAND}
\footnotesize
\label{alg:qarygrand}
\begin{algorithmic}[1]
\Require Parity-check matrix $\mathbf H$, received vector $\mathbf y$, budget $B$
\State Construct the sorted likelihood array using \eqref{eq:vert}--\eqref{eq:horz}
\State Initialize root $\mathbf r^{(0)}=(0,\ldots,0)$ and queue $S\gets\{(-1,-1,\mathbf r^{(0)})\}$
\For{$m=1$ to $B$}
  \State Extract $(i^\ast,j^\ast,\mathbf r)$ of maximum likelihood from $S$
  \If{$\mathbf H\mathbf p(\mathbf r)^T=\mathbf 0$}
    \State \Return $\mathbf p(\mathbf r)$
  \EndIf
  \If{$j^\ast<n-1$}
    \State $\mathbf r'\gets\mathbf r$; $r'_{j^\ast+1}\gets1$; insert first child $(1,j^\ast+1,\mathbf r')$
    \If{$j^\ast\ge 0 \land i^\ast=1$}
      \State $\mathbf r''\gets\mathbf r$; $r''_{j^\ast}\gets0$, $r''_{j^\ast+1}\gets1$; insert third child $(1,j^\ast+1,\mathbf r'')$
    \EndIf
  \EndIf
  \If{$j^\ast\ge 0 \land i^\ast<q-1$}
    \State $\mathbf r'''\gets\mathbf r$; $r'''_{j^\ast}\gets i^\ast+1$; insert second child $(i^\ast+1,j^\ast,\mathbf r''')$
  \EndIf
\EndFor
\State \Return failure
\end{algorithmic}
\end{algorithm}
\vspace{-2.5mm}

\begin{lemma}[Spanning-tree property]\label{lemma:stp}
The child-generation rules induce a rooted spanning tree on
$\{0,\ldots,q-1\}^n$, rooted at $\mathbf r^{(0)}=(0,\ldots,0)$.
\end{lemma}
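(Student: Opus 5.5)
The plan is to exhibit an explicit parent map $\mathrm{par}$ on the non-root patterns. We show that every non-root $\mathbf r$ is generated as a child by exactly one pattern, namely $\mathrm{par}(\mathbf r)$. We then show that iterating $\mathrm{par}$ reaches $\mathbf r^{(0)}$ after finitely many steps, because a suitable integer potential strictly decreases along each step. Together, unique parenthood and termination at the root say that the directed graph of child edges is a tree rooted at $\mathbf r^{(0)}$ that contains every vertex of $\{0,\ldots,q-1\}^n$. Since each child also carries its own generator in Algorithm~\ref{alg:qarygrand}, we first check that the stored generator $(i,j)$ of every inserted child coincides with the canonical generator \eqref{eq:generator_def}. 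In all three rules the new rightmost nonzero coordinate is $j^\ast+1$ with value $1$, or $j^\ast$ with value $i^\ast+1$. So the rules are functions of $\mathbf r$ alone, and edges are well defined on the row-index space.

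Next we define $\mathrm{par}$ by cases on the generator $(i^\ast,j^\ast)$ of a non-root $\mathbf r$.
\begin{itemize}
\item If $i^\ast\ge 2$, only the second child can produce $\mathbf r$: the other two rules produce a generator value of $1$. Set $\mathrm{par}(\mathbf r)$ by lowering $r_{j^\ast}$ to $i^\ast-1$. This has the same generator position, so $\mathbf r$ is indeed its second child.
\item If $i^\ast=1$, the second child is excluded, because it yields a value of at least $2$. A first- or third-child parent must have generator position $j^\ast-1$.
\begin{itemize}
\item If $j^\ast\ge1$ and $r_{j^\ast-1}=0$, the first child is impossible: its parent would have rightmost nonzero entry at $j^\ast-1$, contradicting $r_{j^\ast-1}=0$ unless that parent is the root, and the root's first child is only $(1,0,\ldots,0)$. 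The third child is then the only option, with the parent obtained by setting $r_{j^\ast-1}=1$ and $r_{j^\ast}=0$.
\item If $j^\ast\ge1$ and $r_{j^\ast-1}\ge1$, the third child is impossible, since it would force $r_{j^\ast-1}=0$. The first child is the only option, with the parent obtained by setting $r_{j^\ast}=0$; that parent has generator $(r_{j^\ast-1},j^\ast-1)$.
\item If $j^\ast=0$, then $\mathbf r=(1,0,\ldots,0)$ and its parent is the root via the first child. The third child cannot come from the root, since it needs $j^\ast\ge0$.
\end{itemize}
\end{itemize}
This case analysis establishes existence and uniqueness of the parent. Duplicate generation of the same pattern by different parents is ruled out precisely by this uniqueness, and the main care is making sure the conditions $j^\ast\ge0$ and $i^\ast=1$ in the third rule match the split on $r_{j^\ast-1}$.

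For termination, use the potential $\Phi(\mathbf r)=\sum_j r_j+\sum_j j\,[r_j>0]$, or more simply the lexicographic pair consisting of the number of nonzero entries and then $\sum_j r_j + j^\ast$. The second-child inverse lowers $\sum_j r_j$ by one. The first-child inverse removes a nonzero entry. The third-child inverse keeps $\sum_j r_j$ and the support size but decreases $j^\ast$. So $\mathrm{par}$ strictly decreases a well-founded quantity and every chain reaches the root, the only pattern with no parent. The step I expect to need the most care is the uniqueness case split at $i^\ast=1$; the termination argument is routine.
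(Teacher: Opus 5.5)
Your proof is correct and takes the same route as the paper: the paper uses exactly your four-case parent map ($i^\ast>1$; $i^\ast=1,\ j^\ast=0$; $i^\ast=1$ with $r_{j^\ast-1}>0$; $i^\ast=1$ with $r_{j^\ast-1}=0$) together with a strictly decreasing potential, except that its potential is $\Phi(\mathbf r)=\sum_j r_j q^j$ where you use $\sum_j r_j+\sum_j j\,[r_j>0]$. Two checks in your version go beyond the paper: you rule out every other move that could generate $\mathbf r$, and you verify that the generators stored by the algorithm agree with the canonical generator. Together these make rigorous the uniqueness of the parent, which the paper essentially asserts from the mutual exclusivity of the cases.
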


\begin{proof}
Let $\mathbf e_j$ denote the unit vector in coordinate $j$. For every
non-root pattern $\mathbf r$, using its generator in \eqref{eq:generator_def}, define a parent map $\mathrm{Par}(\mathbf r)$ by
\begin{equation}
\mathrm{Par}(\mathbf r)\triangleq
\begin{cases}
\mathbf r-\mathbf e_{j^\ast}, & i^\ast>1,\\
\mathbf r-\mathbf e_{0}, & i^\ast=1,\ j^\ast=0,\\
\mathbf r-\mathbf e_{j^\ast},
& i^\ast=1,\ j^\ast\geq1,\ r_{j^\ast-1}>0,\\
\mathbf r-\mathbf e_{j^\ast}+\mathbf e_{j^\ast-1},
& i^\ast=1,\ j^\ast\geq1,\ r_{j^\ast-1}=0 .
\end{cases}
\label{eq:parent_map_short}
\end{equation}
The four cases are mutually exclusive. In the first case, $\mathbf r$ is
obtained from $\mathrm{Par}(\mathbf r)$ by a second-child move. In the second case,
$\mathbf r=\mathbf e_0$ is obtained from the root by a first-child move. In
the third case, $\mathbf r$ is obtained from $\mathrm{Par}(\mathbf r)$ by a first-child
move, and in the fourth case by a third-child move. Thus every non-root
pattern has a unique parent under the proposed rules.

It remains to show that every parent chain reaches the root. Define
\begin{equation}
\Phi(\mathbf r) \triangleq \sum_{j=0}^{n-1} r_jq^j.
\label{eq:potential_short}
\end{equation}
In each case of \eqref{eq:parent_map_short},
\begin{equation}
\Phi(\mathrm{Par}(\mathbf r))<\Phi(\mathbf r).
\label{eq:descent_short}
\end{equation}
The first three cases remove a positive contribution at $j^\ast$, while the
fourth replaces $q^{j^\ast}$ by $q^{j^\ast-1}$. Hence cycles are impossible,
and repeated parent application must terminate at the unique vector with
$\Phi=0$, namely the root. Therefore every row-index vector is connected to
the root by a unique parent chain, and the generated graph is a rooted
spanning tree.
\end{proof}
The tree generates each row-index pattern once, so no duplicate-removal table is required.

\begin{theorem}[ML optimality]
Under unlimited search, Algorithm~\ref{alg:qarygrand} returns a
maximum-likelihood codeword.
\end{theorem}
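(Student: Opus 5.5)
The plan is to combine Lemma~\ref{lemma:em} and Lemma~\ref{lemma:stp} into the standard best-first-search argument: the priority queue pops patterns in globally non-increasing order of $\mathrm{LL}$. I would then translate this ordering into ML optimality via the one-to-one correspondence between row-index vectors and candidate words.

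\textbf{Step 1: the search space is exactly the candidate word space.} By Lemma~\ref{lemma:stp}, the child rules form a rooted spanning tree on $\{0,\ldots,q-1\}^n$. Hence, under unlimited search, every row-index vector is eventually inserted into the queue, exactly once. I would observe that the algorithm's insertions match the tree edges; in particular the root has only the first child $\mathbf e_0$. The map $\mathbf r\mapsto\mathbf p(\mathbf r)$ is a bijection onto $\mathbb F_q^n$, because each $\pi_{h(j)}$ is a permutation, $h$ is a permutation, and the symbol mapping is one-to-one. By memorylessness, $\log f(\mathbf y\mid\mathbf p(\mathbf r))=\mathrm{LL}(\mathbf r)$. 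Under uniform priors, ML decoding therefore amounts to maximizing $\mathrm{LL}(\mathbf r)$ over all $\mathbf r$ with $\mathbf H\mathbf p(\mathbf r)^T=\mathbf 0$.

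\textbf{Step 2: the extraction order is non-increasing.} This is the key step. I would show by induction on $m$ that the $m$-th extracted pattern $\mathbf r_m$ satisfies $\mathrm{LL}(\mathbf r_m)\ge\mathrm{LL}(\mathbf s)$ for every not-yet-extracted $\mathbf s$. The induction rests on a frontier invariant: every unextracted pattern $\mathbf s$ has an ancestor in the tree that is currently in the queue $S$. To see this, follow the parent chain $\mathrm{Par}(\cdot)$ from $\mathbf s$ up to the root, which is extracted first. The first unextracted node on this chain has an extracted parent, so it was inserted when that parent was expanded, and it is still in $S$. Applying Lemma~\ref{lemma:em} repeatedly along the chain, that ancestor has $\mathrm{LL}$ at least $\mathrm{LL}(\mathbf s)$. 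Since the popped element maximizes $\mathrm{LL}$ over $S$, it dominates every unextracted $\mathbf s$. Deterministic tie-breaking makes the order well defined, and any tie between codewords yields an equally likely, hence still ML, choice.

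\textbf{Step 3: the first valid candidate is ML.} Let $\mathbf r_m$ be the first extracted pattern passing \eqref{eq:parity_check}. A codeword can be either extracted earlier or not yet extracted; it cannot have been extracted earlier, since $\mathbf r_m$ is the first valid one. So every other codeword is unextracted, and by Step~2 it has $\mathrm{LL}\le\mathrm{LL}(\mathbf r_m)$. Termination is guaranteed because the search space is finite and, by Step~1, covered, and $\mathcal C$ is nonempty since it contains $\mathbf 0$. So the algorithm returns $\mathbf p(\mathbf r_m)$, which is an ML codeword.

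I expect the main obstacle to be stating the frontier invariant of Step~2 carefully. In particular, the "first unextracted ancestor" argument needs uniqueness of parents from Lemma~\ref{lemma:stp}; without it, non-duplication of queue entries could fail. The remaining steps are bookkeeping.
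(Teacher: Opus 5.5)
Your proposal is correct and follows the paper's own argument. Monotone edges plus the spanning tree guarantee that the queue always holds an ancestor of every unseen pattern, so extraction proceeds in globally non-increasing $\mathrm{LL}$ order and the first valid candidate is ML. You also make explicit several points the paper leaves implicit: that $\mathbf r\mapsto\mathbf p(\mathbf r)$ is a bijection, that every extracted invalid pattern gets expanded, and that the search terminates because $\mathbf 0\in\mathcal C$. One small correction: the frontier invariant (with the chain read from the root downward) only needs a chain of generation edges from the root to each pattern, whereas parent uniqueness is what rules out duplicate queue entries.
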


\begin{proof}
By Lemma~\ref{lemma:em}, likelihood is non-increasing along every generated edge. By
Lemma~\ref{lemma:stp}, the generated graph is a rooted spanning tree of the full row-index
space. At any stage of the best-first search, the priority queue contains
the frontier of the explored subtree. Every unseen pattern is a descendant of
some frontier pattern and therefore has likelihood no larger than that
frontier ancestor. Hence the most likely unseen pattern always lies on the
frontier, and best-first extraction tests patterns in global non-increasing
likelihood order.

Let $\mathbf p(\mathbf r^\ast)$ be the first tested candidate satisfying
$\mathbf{H}\mathbf p(\mathbf r^\ast)^T=\mathbf0$. If another codeword
$\mathbf p(\mathbf r)$ had strictly larger likelihood, then $\mathbf r$ would
have appeared earlier in the globally ordered test sequence and would have
been accepted before $\mathbf r^\ast$, a contradiction. Therefore
$\mathbf p(\mathbf r^\ast)$ is maximum-likelihood.
\end{proof}

\section{Algorithmic Cost and Implementation Considerations}\label{sec:wc}
\label{sec:complexity}
The computational cost of $q$-ary GRAND is determined by more than the
number of tested patterns. Let $m$ denote the number of candidate patterns
submitted to the parity-check test before termination, and let $F$ denote
the peak number of live patterns in the search frontier. We distinguish
preprocessing, finite-field arithmetic, frontier management, storage, and
query count. The quantity $m$ is retained as a measure of search depth
rather than as a complete measure of implementation complexity. With a
finite budget $B$, one has $m\leq B$, while under unlimited search $m$ is
the stopping time at which the first valid codeword is found.
The one-time preprocessing consists of computing and sorting the symbol
likelihoods. Computing all local metrics costs $O(nq)$, vertical sorting
costs $O(nq\log q)$, and horizontal sorting of the $n$ first-deviation
costs requires $O(n\log n)$. The total preprocessing cost is therefore
\begin{equation}
O(nq\log q+n\log n).
\end{equation}

\begin{figure}[!t]
\vspace{-8mm}
\centering
\includegraphics[width=\columnwidth]{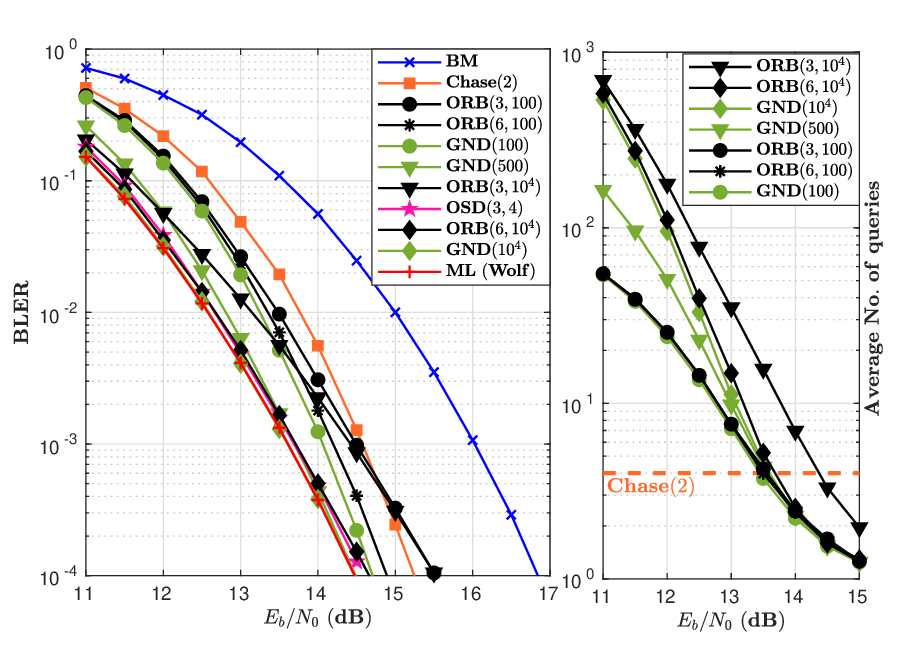}\vspace{-.3cm}
\caption{Performance of decoders for RS$(20,18)$ over $\mathbb F_{64}$ with $64$-QAM. BLER is shown on the left and the average number of queries on the right.}
\label{fig:rs20_18}
\vspace{-.3cm}
\end{figure}

During the search, each expansion requires one priority-queue extraction
and generates at most three children. Let $\rho=n-k$ denote the code redundancy. With incremental syndromes, a child that changes one coordinate requires $O(\rho)$ finite-field operations, while
the exchange child changes two coordinates and has the same asymptotic
cost up to a constant factor. Priority-queue insertion and extraction
require $O(\log F)$ operations. The resulting post-preprocessing
complexity for $m$ tested patterns is therefore
\begin{equation}
O\!\left(m(\rho+\log F)\right).
\end{equation}
Since the search graph is a tree, each row-index pattern has a unique
parent and no duplicate-pattern table is required.
Exact best-first ordering nevertheless requires a dynamic frontier. If a
direct implementation stores the full row-index vector and syndrome for
every live node, the storage scales as
\begin{equation}
O\!\left(F(n+\rho)\right).
\end{equation}
Parent-pointer or difference-based representations can reduce the
pattern-description storage, at the expense of reconstruction or
additional cached state. Thus, the bounded three-child expansion does not
imply bounded overall memory, since $F$ remains dependent on the received
vector and search depth. This dynamic-frontier requirement is also
characteristic of exact soft-input SGRAND~\cite{sgrand}. A distinguishing feature of the proposed construction is its canonical three-child, unique-parent
organization, which preserves exact likelihood ordering without duplicate
suppression.
Symbol-level ORBGRAND~\cite{sorb} makes a different implementation
tradeoff. It replaces exact likelihood ordering by a more regular
approximate schedule that is attractive for parallel hardware. Its
scheduler and storage can consequently be provisioned more regularly than
the received-vector-dependent frontier of exact best-first search. Its
generated patterns may include invalid symbol combinations that are
discarded before the parity-check test, so raw generated patterns and
parity-check queries are distinct quantities. In contrast, the proposed
tree generates row-index patterns directly, while retaining the
priority-queue and dynamic-memory costs associated with exact best-first
ordering. OSD~\cite{foss} has a different complexity structure again,
requiring reliability ordering and code-dependent most-reliable-basis
reprocessing followed by enumeration of a predetermined set of test
codewords. These distinctions prevent query count alone from establishing
universal implementation superiority. The proposed decoder provides exact
likelihood ordering, bounded local generation, and data-dependent
stopping, whereas ORBGRAND offers a more regular generation structure
suited to highly parallel implementations and OSD exchanges dynamic search
for code-dependent reprocessing.

\begin{figure}[!t]
\vspace{-8mm}
\centering
\includegraphics[width=\columnwidth]{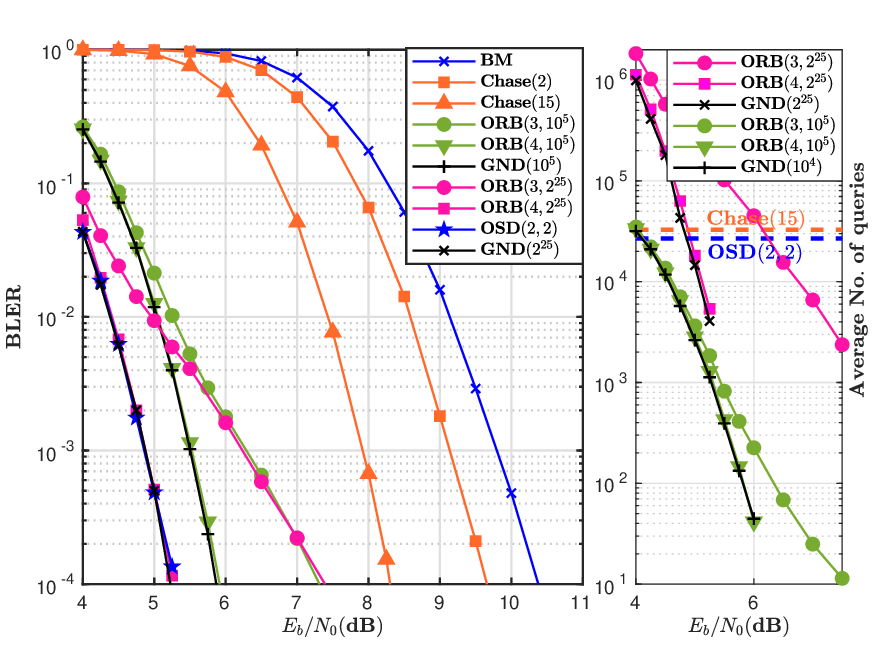}\vspace{-.3cm}
\caption{Same as Fig. \ref{fig:rs20_18} but for eBCH$(256,231)$ with QPSK.}
\label{fig:bch256_231}
\vspace{-.3cm}
\end{figure}

\section{Numerical Results}\label{sec:num}
We consider two simulation settings. The first evaluates $q$-ary GRAND on RS$(20,18)$ over $\mathbb F_{64}$ with 64-QAM. The code has rate $R=0.9$, minimum distance $d=3$, and throughput $\eta=R\log_2(64)=5.4$ bits per channel use (bpcu). The second considers a longer, high-rate binary QPSK-modulated extended BCH$(256,231)$ code with code rate $R=231/256 \approx 0.9$, minimum distance $d=8$, and throughput $\eta=R\log_2(4) \approx 1.8$ bpcu. In this case, the modulation alphabet size is $M=4$, whereas the code alphabet size is $q=2$. Test patterns are generated in the modulation domain, while the parity check is performed after mapping each pattern to the binary code domain. The RS and BCH codes are implemented in nonsystematic form using Fourier transform (FT) over finite-field constructions \cite{bla-a}. The RS code is constructed using $21$-point FT followed by shortening a symbol. The eBCH$(256,231)$ code is obtained by extending the corresponding length-$255$ BCH code using $255$-point FT. All constellations are normalized to unit average symbol energy, and the channel is complex AWGN with distribution $\mathcal{CN}(0;N_0)$. We define signal-to-noise ratio (SNR) as $E_s/N_0=R\log_2(M)E_b/N_0$, where $E_s$ is the average modulation-symbol energy and $E_b$ is the information-bit energy. A finite budget $B$ means that at most $B$ row-index patterns are tested.

We compare $q$-ary GRAND with several benchmark decoders. BM decoding~\cite{bla-a} is used as a hard-decision baseline. Chase decoding~\cite{chase} provides a small-list soft-decision reference. It selects the $L$ least reliable symbols, tests $2^L$ hard-decision patterns using BM decoding, and retains the valid codeword with minimum Euclidean distance to the received vector. Symbol-level ORBGRAND~\cite{sorb} considers the $\mu$ most likely options for each received symbol, sorts their reliabilities, approximates the ordered reliabilities by a three-segment linear function, and generates test patterns according to this approximation. Invalid patterns are discarded. We also evaluate ordered-statistics decoding (OSD)~\cite{foss}. To keep its complexity manageable, OSD generates test codewords by modifying up to $t$ symbols among the $k$ most reliable symbols, with the $\mu$ most likely symbol options considered at each selected position. For the RS$(20,18)$ case, Wolf's syndrome-trellis decoder~\cite{wolf} is used as the exact ML reference.

Fig.~\ref{fig:rs20_18} reports the results for RS$(20,18)$ over $\mathbb F_{64}$ with 64-QAM. Wolf's syndrome-trellis decoder is feasible for this high-rate code because its complexity is governed primarily by the redundancy $n-k$. For OSD, test codewords are generated by modifying up to $t=4$ symbols among the $18$ most reliable symbols, with $\mu=3$ most likely symbol options considered at each selected position. Hence, OSD tests \(N_{\mathrm{OSD}}=\sum_{w=0}^{4}\binom{18}{w}2^w =56{,}137 \) codewords per received block. As shown in Fig.~\ref{fig:rs20_18}, $q$-ary GRAND with $B=10^4$ operates very close to the ML benchmark. OSD is also close to ML, as is symbol-level ORBGRAND with $\mu=6$ and $B=10^4$ valid test patterns.

Fig.~\ref{fig:bch256_231} reports the results for the QPSK-modulated binary extended BCH$(256,231)$ code. Symbol-level ORBGRAND with $\mu=4$ and $q$-ary GRAND achieve similar performance for a budget of $B=10^5$, whereas the performance of ORBGRAND with $\mu=3$ degrades at high $E_b/N_0$. For OSD, bit log-likelihoods are first calculated from the $128$ received QPSK symbols, and decoding is then performed in the binary domain. With a maximum test-pattern weight of $t=2$, OSD tests \( N_{\mathrm{OSD}} =\sum_{w=0}^{2}\binom{231}{w}
=26{,}797
\) codewords per received block. In this case, OSD achieves performance similar to the GRAND decoders with $B=2^{25}=33{,}554{,}432$. BM and Chase decoding with $L=2$ and $L=15$ exhibit a substantial SNR loss compared with GRAND and OSD. GRAND-based algorithms achieve near-ML performance with a budget of $B=q^{n-k}$ queries. For a fixed $k$, as $n$ increases (i.e., code rate decreases), the performance of the algorithms degrades when the computation burden limits $B$ to a level much lower than $q^{n-k}$.

In the figures, ORBGRAND, Chase, OSD, and $q$-ary GRAND are denoted by $\mathbf{ORB}(\mu,B)$, $\mathbf{Chase}(L)$, $\mathbf{OSD}(\mu, t)$ and $\mathbf{GND}(B)$, respectively.

The number of tested patterns is the main finite-budget search statistic. Unlike Chase decoding and OSD, which test a fixed number of candidate patterns or codewords, $q$-ary GRAND has a data-dependent stopping time. Figs.~\ref{fig:rs20_18} and~\ref{fig:bch256_231} show that the average number of tested patterns decreases rapidly with SNR. For RS$(20,18)$, the average count is approximately $2.2$ at $14$ dB and $1.5$ at $14.5$ dB, already below the Chase list size of four. Thus, the finite-budget advantage is not limited to BLER performance. The number of tested patterns also becomes small in the same SNR region where the decoder approaches ML performance. The average valid-query curves for ORBGRAND with $\mu=6$ for $M=64$ and with $\mu=4$ for $M=4$ are close to those of the corresponding $q$-ary GRAND cases. Overall, unlimited-search $q$-ary GRAND reproduces exact ML decoding in the representative non-binary RS benchmark, while finite-budget $q$-ary GRAND provides an advantageous search depth on average at high SNR values.

\section{Conclusion}\label{sec:con}
We introduced $q$-ary GRAND as a likelihood-ordered symbol-level decoder for binary and non-binary linear codes with higher-order modulation. Its three local generation rules form a monotone spanning tree over the symbol-deviation space and therefore recover ML decoding under unlimited search. Agreement with Wolf decoding in the RS case verifies the ML result numerically. The RS and eBCH experiments further show that finite search budgets retain most of this decoding gain while the average number of tested patterns falls to only a few at high SNR.
\balance
\bibliography{qgrand} \bibliographystyle{IEEEtran} 

@article{chase,
  title={Class of algorithms for decoding block codes with channel measurement information},
  author={Chase, David},
  journal={IEEE Trans. Inf. Theory},
  volume={18},
  number={1},
  pages={170--182},
  year={1972}
  }

@article{wolf,
  title={Efficient maximum likelihood decoding of linear block codes using a trellis},
  author={Wolf, Jack},
  journal={IEEE Trans. Inf. Theory},
  volume={24},
  number={1},
  pages={76--80},
  year={1978},
  publisher={IEEE}
}

@article{duffy2019grand,
  title={Capacity-achieving guessing random additive noise decoding},
  author={Duffy, Ken R and Li, Jiange and M{\'e}dard, Muriel},
  journal={IEEE Trans. Inf. Theory},
  volume={65},
  number={7},
  pages={4023--4040},
  year={2019},
  publisher={IEEE}
}

@inproceedings{grand_markov,
  title={Guessing noise, not code-words},
  author={Duffy, Ken R and Li, Jiange and M{\'e}dard, Muriel},
  booktitle={2018 IEEE Int. Symp. Inf. Theory (ISIT)},
  year={2018}
}

@article{orbgrand,
  title={Ordered reliability bits guessing random additive noise decoding},
  author={Duffy, Ken R and An, Wei and M{\'e}dard, Muriel},
  journal={IEEE Trans. Signal Process.},
  volume={70},
  pages={4528--4542},
  year={2022},
  publisher={IEEE}
}

@inproceedings{sorb,
  title={Soft decoding without soft demapping with {ORBGRAND}},
  author={An, Wei and M{\'e}dard, Muriel and Duffy, Ken R},
  booktitle={IEEE Int. Symp. Inf. Theory (ISIT)},
  pages={1080--1084},
  year={2023}
}

@inproceedings{orbcc,
  title={Using channel correlation to improve decoding-{ORBGRAND-AI}},
  author={Duffy, Ken R and Grundei, Moritz and M{\'e}dard, Muriel},
  booktitle={IEEE Global Commun. Conf. (GLOBECOM)},
  year={2023}
}

@inproceedings{orb-hw,
  title={Low-latency modulation- and correlation-adaptive {ORBGRAND-AI} decoder},
  author={Kizilates, Zeynep Ece and Riaz, Arslan and Bali, Akshaya and Grundei, Moritz and M{\'e}dard, Muriel and Duffy, Ken R and Yazicigil, Rabia Tugce},
  booktitle={2025 IEEE Eur. Solid-State Electron. Res. Conf. (ESSERC)},
  year={2025}
}

@article{sorb-fc,
  title={Symbol-level {GRAND} for high-order modulation over block fading channels},
  author={Chatzigeorgiou, Ioannis and Monteiro, Francisco A},
  journal={IEEE Commun. Lett.},
  volume={27},
  number={2},
  pages={447--451},
  year={2023}
}

@article{orbRS,
  title={Group Probability Decoding of Turbo Product Codes over Higher-Order Fields},
  author={Rapp, Lukas and M{\'e}dard, Muriel and Duffy, Ken R},
  journal={IEEE Trans. Commun.},
  year={2026}
}

@article{sog,
  title={Soft-output ({SO}) {GRAND} and iterative decoding to outperform {LDPC} codes},
  author={Yuan, Peihong and M{\'e}dard, Muriel and Galligan, Kevin and Duffy, Ken R},
  journal={IEEE Trans. Wireless Commun.},
  volume={24},
  number={4},
  pages={3386--3399},
  year={2025},
  publisher={IEEE}
}

@inproceedings{sgrand,
  title={Soft maximum likelihood decoding using {GRAND}},
  author={Solomon, Amit and Duffy, Ken R and M{\'e}dard, Muriel},
  booktitle={IEEE Int. Conf. Commun. (ICC)},
  year={2020}
  }

@phdthesis{a-sol,
  title={Noise-centric decoding},
  author={Solomon, Amit},
  year={2021},
  school={Massachusetts Institute of Technology}
}

@inproceedings{gcd,
  title={Guessing what, noise or codeword?},
  author={Ma, Xiao},
  booktitle={IEEE Inf. Theory Workshop (ITW)},
  year={2024}
}

@article{foss,
  title={Soft-decision decoding of linear block codes based on ordered statistics},
  author={Fossorier, Marc PC and Lin, Shu},
  journal={IEEE Trans. Inf. Theory},
  volume={41},
  number={5},
  pages={1379--1396},
  year={1995}
}

@BOOK{bla-a,
TITLE={Algebraic codes for data transmission},
AUTHOR="Richard~E.~Blahut",
PUBLISHER="Cambridge University Press",
YEAR=2003}
\end{document}